\providecommand{\algorithmname}{Algorithm}
\newtheorem{thm}{\protect\theoremname}
\newtheorem{defn}[thm]{\protect\definitionname}
\newtheorem{cor}[thm]{\protect\corollaryname}
\newtheorem{prop}[thm]{\protect\propositionname}
\newtheorem{lem}[thm]{\protect\lemmaname}
\providecommand{\corollaryname}{Corollary}
\providecommand{\definitionname}{Definition}
\providecommand{\lemmaname}{Lemma}
\providecommand{\propositionname}{Proposition}
\providecommand{\remarkname}{Remark}
\providecommand{\theoremname}{Theorem}
\begin{document}

\title{Improved Approximation Algorithms for Computing $k$ Disjoint Paths
Subject to Two Constraints }

\titlerunning{Improved Approximation Algorithms for $k$BCP}

\author{\textcolor{black}{Longkun Guo$^{1}$ \thanks{\textcolor{black}{This project was supported by the Natural Science
Foundation of Fujian Province (2012J05115), Doctoral Fund of Ministry
of Education of China for Young Scholars (20123514120013) and Fuzhou
University Development Fund (2012-XQ-26). Longkun Guo (lkguo@fzu.edu.cn) is the corresponding author.}
}, Hong Shen$^{2,3}$, Kewen Liao$^{3}$}}
\authorrunning{Longkun Guo, Hong Shen and Kewen Liao}
\institute{
\textcolor{black}{{} $^{1}$College of Mathematics and Computer Science,
Fuzhou University, China}\\\textcolor{black}{{} $^{2}$School of Information Science and Technology, Sun Yat-Sen University, China}\\
\textcolor{black}{{} $^{3}$School of Computer Science, University of
Adelaide, Australia}}

\maketitle
\begin{abstract}
For a given graph $G$ with positive integral cost and delay on edges,
distinct vertices $s$ and $t$, cost bound $C\in Z^{+}$ and delay
bound $D\in Z^{+}$, the $k$ bi-constraint path ($k$BCP) problem
is to compute $k$ disjoint $st$-paths subject to $C$ and $D$.
This problem is known NP-hard, even when $k=1$ \cite{garey1979computers}.
This paper first gives a simple approximation algorithm with factor-$(2,2)$,
i.e. the algorithm computes a solution with delay and cost bounded
by $2*D$ and $2*C$ respectively. Later, a novel improved approximation
algorithm with ratio $(1+\beta,\,\max\{2,\,1+\ln\frac{1}{\beta}\})$
is developed by constructing interesting auxiliary graphs and employing
the cycle cancellation method. As a consequence, we can obtain a factor-$(1.369,\,2)$
approximation algorithm by setting $1+\ln\frac{1}{\beta}=2$ and a
factor-$(1.567,\,1.567)$ algorithm by setting $1+\beta=1+\ln\frac{1}{\beta}$.
Besides, by setting $\beta=0$, an
approximation algorithm with ratio $(1,\, O(\ln n))$, i.e. an algorithm
with only a single factor ratio $O(\ln n)$ on cost, can be immediately
obtained. To the best of our knowledge, this is the first non-trivial
approximation algorithm for the $k$BCP problem that strictly obeys
the delay constraint. \end{abstract}
\begin{keywords}
$k$-disjoint bi-constraint path, NP-hard, bifactor approximation
algorithm, auxiliary graph, cycle cancellation.
\end{keywords}

\section{Introduction}

In real networks, there are many applications that require quality
of service and some degree of robustness simultaneously. Typically,
the quality of service (QoS) related problem requires routing between
the source node and the destination node to satisfy several constraints
simultaneously, such as bandwidth, delay, cost and energy consumption.
Nevertheless, in networks, some time-critical applications also require
routing to remain functioning while edge or vertex failure occurs.
A common solution is to compute $k$ disjoint paths that satisfy the
QoS constraints, and use one path as an \emph{active} path whilst
the other paths as \emph{backup} paths. The routing traffic is carried
on the active path, and switched to the disjoint backup paths while
an edge or vertex failure occurs on the active path. However, for
some time-critical applications even the time to discover failures
of routing and restore data transmission in backup paths is too long
for them. For such applications, packages are routed via $k$ paths
simultaneously, and the traffic is switched from failed paths to functioning
paths if edge or vertex failures occur, such that routing can tolerate
$k-1$ edge (vertex) failures. Therefore, given cost and delay as
the QoS constraints, the \emph{disjoint QoS Path problem }arises as
below:
\begin{defn}
For a graph $G=(V,\, E)$ and a pair of distinct vertices $s,\, t\in V$,
a cost function $c:E\rightarrow Z^{+}$, a delay function $d:E\rightarrow Z^{+}$,
a cost bound $C\in Z^{+}$ and a delay bound $D\in Z^{+}$, the\emph{
$k$-disjoint QoS Paths problem} is to compute $k$ disjoint $st$-paths
$P_{1},\dots,P_{k}$, such that $\underset{i=1,\dots,k}{\sum}c(P_{i})\leq C$
and $d(P_{i})\leq D$ for every $i=1,\dots,k$.
\end{defn}
This problem is NP-hard even when all edges of $G$ are with cost
0 \textcolor{black}{\cite{li1989cft}}, which results in the difficulty
to approximate the\emph{ }$k$-disjoint QoS Paths problem. An alternative
method is to compute $k$ disjoint with total cost bounded by $C$
and delay bounded by $D$ ( equal to $kD$ in Definition 1), and then
route the packages via the paths according to their urgency priority,
i.e., route urgent packages via paths of low delay whilst deferrable
ones via paths of high delay of the $k$ disjoint paths. Therefore,
The disjoint bi-constraint path problem arises as in the following:
\begin{defn}
(The $k$ disjoint bi-constraint path problem, $k$BCP) For a graph
$G=(V,E)$ with a pair of distinct vertices $s,t\in V$, a cost function
$c:E\rightarrow R^{+}$, a delay function $d:E\rightarrow R^{+}$,
a cost bound $C\in Z^{+}$ and a delay bound $D\in R^{+}$, the\emph{
$k$-disjoint bi-constraint path problem} is to calculate $k$ disjoint
$st$-paths $P_{1},\dots,P_{k}$, such that $\underset{i=1,\dots,k}{\sum}c(P_{i})\leq C$
and $\underset{i=1,\dots,k}{\sum}d(P_{i})\leq D$.
\end{defn}
This paper will focus on bifactor approximation algorithms for the
$k$BCP problem, which are introduced as below:
\begin{defn}
An algorithm $A$ is a bifactor $\left(\alpha,\,\beta\right)$-approximation
for the $k$BCP problem, if and only if for every instance of $k$BCP\emph{,
$A$} computes $k$ disjoint $st$-paths of which the delay sum and
the cost sum are bounded by $\alpha*D$ and $\beta*C$ respectively.
\end{defn}

Since a $\beta$-approximation with the single factor ratio on cost
is identical to a bifactor $\left(1,\,\beta\right)$-approximation,
we use them interchangeably in the text.

\subsection{Related work}

This $k$BCP problem is NP-hard even when $k=1$ \cite{garey1979computers}.
To the best of our knowledge, this paper is the first one that presents
non-trivial approximation algorithms for the $k$BCP problem formally.
However, a number of papers have addressed problems closely related
to $k$BCP, in particular the $k$ restricted shortest path problem
($k$RSP), which is to calculate $k$ disjoint $st$-paths of minimum
cost-sum under the delay constraint $\underset{i=1,\dots,k}{\sum}d(P_{i})\leq D$.
An algorithm with bifactor approximation ratio $(2,\,2)$ has been
developed in \cite{guopdcat} for general $k$, while no approximation
solution that strictly obeys the delay (or cost) constraint is known
even when $k=2$. For a positive real number $r$, bifactor ratio of $(1+\frac{1}{r},\, r(1+\frac{2(\log r+1)}{r})(1+\epsilon))$
and $(1+\frac{1}{r},\, r(1+\frac{2(\log r+1)}{r}))$ have been achieved
respectively in \cite{orda2004efficient,chao2007new} for the case
$k=2$ and under the assumption that the delay of each path in the
optimal solution of $k$RSP is bounded by $\frac{D}{k}$.

Special cases of this problem have been studied. When the delay constraint
is removed, this problem is reduced to the min-sum problem, which
is to calculate $k$ disjoint paths with the total cost minimized.
This problem is known polynomially solvable \cite{suurballe1974dpn}.
Moreover, when $k=1$, the problem reduces to the single bi-constraint
path (BCP) problem, which is known as the basic QoS routing problem
\cite{garey1979computers} and admits full polynomial time approximation
scheme (FPTAS) \cite{garey1979computers,lorenz2001simple}. Recently,
the single BCP problem is still attracting considerable interests of the
researchers. The strongest result known is a ($1+\epsilon$)-approximation
due to Xue et al \cite{xue2008polynomial}.

\textcolor{black}{Additionally, }when the cost constraint is removed,\textcolor{black}{{}
the disjoint QoS problem reduces to the length bounded disjoint path
problem of finding two disjoint paths with the length of each path
constrained by a given bound. This problem is a variant of the min-Max
problem of finding two disjoint paths with the length of the longer
path minimized . Both of the two problems are known to be }\textcolor{black}{\emph{NP-}}\textcolor{black}{complete
\cite{li1989cft}, and with the best possible approximation ratio
of 2 in digraphs \cite{li1989cft}, which can be achieved by applying
the algorithm for the min-sum problem in \cite{suurballe1974dpn,suurballe1984qmf}.
Contrastingly, the min-min problem of finding two paths with the length
of the shorter path minimized is NP-complete and doesn't admit $K$
approximation for any $K\geq1$ \cite{polyvertexMinMin,xu2006caa,bhatia2006finding}.
The problem remains NP-complete and admits no polynomial time approximation
scheme in planar digraphs \cite{guo2012tcs}. }

\subsection{Our techniques and results}

The main result of this paper is a factor-$(1+\beta,\,\max\{2,\,1+\ln\frac{1}{\beta}\})$
approximation algorithm for any $0<\beta\leq1$ for the $k$BCP problem.
The main idea of the algorithm is firstly to compute $k$-disjoint
paths with delay-sum bounded by $\alpha D$ and cost-sum bounded by
$(2-\alpha)*C$, where $0\leq\alpha\leq2$ is a real number, and secondly
to improve the computed $k$ paths by novelly combining cycle cancellation
\cite{orda2004efficient} and cost-bounded auxiliary graph construction
\cite{xue2008polynomial}. The key technique to prove the algorithm's
approximation ratio is using definite integral to compute a close
form for the sum of the cost increment during the improving phase.

As a consequence of the main result, we can obtain a factor-$(1.369,\,2)$
approximation algorithm by setting $1+\ln\frac{1}{\beta}=2$, and
a factor-$(1.567,\,1.567)$ algorithm by setting $1+\beta=1+\ln\frac{1}{\beta}$
and slightly modifying our algorithm (to improve either cost or delay
that is with worse ratio). Nevertheless, by slightly modifying our
ratio proof, we show that an approximation algorithm with ratio $(1,\, O(\ln n))$,
i.e. an algorithm with single factor ratio of $O(\ln n)$ on cost,
can be immediately obtained by setting $\beta=0$. To the best of
our knowledge, this is the first non-trivial approximation algorithm
for the $k$BCP problem that strictly obeys the delay constraint.

We note that our algorithms are with pseudo-polynomial time complexity,
since the auxiliary graph we construct is of size $O(C*n)$. However,
by using the classic polynomial time approximation scheme design technique
\cite{garey1979computers}, i.e. for any small $\epsilon>0$ setting
the cost of every edge to $\left\lfloor \frac{c(e)}{\frac{\epsilon C}{n}}\right\rfloor $
in $G$ before the construction of auxiliary graph, we can immediately
obtain a polynomial time algorithm with ratio $((1+\beta)*(1+\epsilon),\,\max\{2,\,1+\ln\frac{1}{\beta}\}*(1+\epsilon))$.
We shall omit the details due to the paper length limitation.

\section{An improved approximation algorithm for computing $k$ disjoint bi-constraint
paths}

This section will first present a simple approximation method for
computing $k$-disjoint paths with delay-sum bounded by $\alpha D$
and cost-sum bounded by $(2-\alpha)*C$, where $0\leq\alpha\leq2$
is a real number, and secondly improve the computed $k$ paths by
balancing the value of $\alpha$ and $2-\alpha$. Though the presented
simple algorithm is with worse ratio than that of the algorithm for
$k=2$ in \cite{orda2004efficient}, it suits the improving phase
better.

\subsection{A basic approximation algorithm}

Observing that the difficulty of computing $k$-disjoint bi-constraint
paths mainly comes from the two given constraints, the key idea of
our algorithm is to deal with one new constraint $B$ instead of the
two given constraints $C$ and $D$. Our algorithm firstly assigns
a new mixed cost $b(e)=\frac{c(e)}{C}+\frac{d(e)}{D}$ to every edge
in graph, and secondly computes $k$ disjoint paths with the new cost
sum bounded by $B=\frac{C}{C}+\frac{D}{D}=2$. Note that the second
step can be accomplished in polynomial time by employing the SPP algorithm
due to Suurballe and Tarjan \cite{suurballe1974dpn,suurballe1984qmf}.
The detailed algorithm is as in Algorithm \ref{alg:basic-appro}.

\begin{algorithm}
\textbf{Input:} A graph $G=(V,E)$, each edge $e$ with cost $c(e)$
and delay $d(e)$, a given cost constraint $C\in Z^{+}$ and delay
constraint $D\in Z^{+}$;

\textbf{Output:} $k$ disjoint paths $P_{1},\, P_{2}\,\dots,P_{k}$.
\begin{enumerate}
\item Set the new cost of edge $e$ as $b(e)=\frac{c(e)}{C}+\frac{d(e)}{D}$;
\item Compute the $k$ disjoint paths $P_{1},\, P_{2}\,\dots,P_{k}$ in
$G$ by using Suurballe and Tarjan's algorithm \cite{suurballe1974dpn,suurballe1984qmf},
such that $\sum_{i=1}^{k}\sum_{e\in P_{i}}b(e)$ is minimized;
\item Return $P_{1},\, P_{2}\,\dots,P_{k}$.
\end{enumerate}
\caption{\label{alg:basic-appro}A basic approximation algorithm for the $k$-BCP
problem}
\end{algorithm}

The time complexity and performance guarantee of Algorithm \ref{alg:basic-appro}
is given by the following theorem:
\begin{thm}
\label{thm:baseAlgo}Algorithm \ref{alg:basic-appro} runs in $O(km\log_{1+\frac{m}{n}}n)$
time, and computes $k$-disjoint paths with delay-sum bounded by $\alpha D$
and cost-sum bounded by $(2-\alpha)*C$ , where $0\leq\alpha\leq2$
is a real number.\end{thm}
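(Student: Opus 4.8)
The plan is to establish the two claims of the theorem separately: first the running time, then the approximation guarantee on delay-sum and cost-sum.

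The plan is to treat the two assertions of the theorem separately: the running time follows directly from the cited subroutine, while the approximation guarantee rests on a single clean observation about the mixed cost $b$.

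For the running time, I would note that Steps~1 and~3 take $O(m)$ and $O(n)$ time respectively, so the total cost is dominated by Step~2. The bound $O(km\log_{1+\frac{m}{n}}n)$ is exactly the time Suurballe and Tarjan's algorithm takes to compute the $k$ minimum-$b$-cost disjoint paths, realized through $k$ successive shortest-path augmentations, each implemented with a Dijkstra variant using an appropriate $d$-heap; I would invoke their result directly rather than re-deriving it.

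For the approximation guarantee, the crux is to evaluate the mixed cost $b$ on an optimal $k$BCP solution. Let $P_{1}^{*},\dots,P_{k}^{*}$ be an optimal solution, so that $\sum_{i}c(P_{i}^{*})\leq C$ and $\sum_{i}d(P_{i}^{*})\leq D$. By linearity and the definition $b(e)=\frac{c(e)}{C}+\frac{d(e)}{D}$, I would compute
$$\sum_{i=1}^{k}b(P_{i}^{*})=\frac{\sum_{i}c(P_{i}^{*})}{C}+\frac{\sum_{i}d(P_{i}^{*})}{D}\leq\frac{C}{C}+\frac{D}{D}=2.$$
Since Algorithm~\ref{alg:basic-appro} returns $k$ disjoint paths $P_{1},\dots,P_{k}$ of minimum total $b$-cost, its output inherits $\sum_{i}b(P_{i})\leq\sum_{i}b(P_{i}^{*})\leq2$, that is, $\frac{\sum_{i}c(P_{i})}{C}+\frac{\sum_{i}d(P_{i})}{D}\leq2$.

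To finish, I would set $\alpha:=\frac{1}{D}\sum_{i}d(P_{i})$, so the delay-sum equals $\alpha D$ by definition; the displayed inequality then gives $\frac{1}{C}\sum_{i}c(P_{i})\leq2-\alpha$, i.e. the cost-sum is at most $(2-\alpha)C$. Nonnegativity of $c$ and $d$ forces $\alpha\geq0$ and $2-\alpha\geq0$, hence $0\leq\alpha\leq2$, completing the argument. There is no genuine obstacle here; the only conceptual point is recognizing that folding both constraints into the single mixed cost $b$ makes the optimal $k$BCP solution a feasible competitor of $b$-cost at most $2$, which is precisely what lets the minimum-$b$-cost paths inherit the split guarantee between delay and cost.
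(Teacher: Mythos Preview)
Your proof is correct and follows essentially the same approach as the paper's: bound the mixed cost $b$ of the optimum by $2$, use optimality of the algorithm's output to inherit that bound, and split the resulting inequality into the $\alpha$ and $2-\alpha$ parts. Your choice to define $\alpha:=\frac{1}{D}\sum_i d(P_i)$ is in fact slightly cleaner than the paper's (which sets $\alpha$ relative to $d(OPT)$ and then somewhat loosely identifies $d(OPT)$ with $D$ and $c(OPT)$ with $C$), but the argument is otherwise identical.
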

\begin{proof}
The main part of Algorithm \ref{alg:basic-appro} takes $O(km\log_{1+\frac{m}{n}}n)$
to compute $k$-disjoint paths by using Surrballe and Tarjan's algorithm
\cite{suurballe1974dpn,suurballe1984qmf}, and other parts of the
algorithm take trivial time. Hence the time complexity of the algorithm
is $O(km\log_{1+\frac{m}{n}}n)$.

It remains to show the approximation ratio. To make the proof concise,
we denote by $OPT$ an optimal solution for the $k$-disjoint BCP
paths problem, and $SOL$ the solution of Algorithm \ref{alg:basic-appro}.
Obviously $\sum_{e\in OPT}b(e)\leq2$ holds. Then since the $k$ disjoint
paths is with minimum new cost, we have
\begin{equation}
\sum_{e\in SOL}b(e)\leq\sum_{e\in OPT}b(e)\leq2.\label{eq:b(e) bound}
\end{equation}
 Assume the delay-sum of the algorithm is $\alpha$ times of $d(OPT)$,
then following Algorithm \ref{alg:basic-appro} $0\leq\alpha\leq2$ holds. Therefore, we
have $\sum_{e\in SOL}b(e)=\sum_{i=1}^{k}\sum_{e\in P_{i}}b(e)=\alpha+\frac{c(SOL)}{c(OPT)}$.
From Inequality (\ref{eq:b(e) bound}), $\alpha+\frac{c(SOL)}{c(OPT)}\leq2$
holds. That is, $c(SOL)\leq(2-\alpha)c(OPT)\leq(2-\alpha)C$. This
completes the proof.
\end{proof}
Note that $\alpha$ differs for different instances, i.e. Algorithm
\ref{alg:basic-appro} may return a solution with cost $2*c(OPT)$
and delay $0$ for some instances, while a solution with cost 0 and
delay $2*d(OPT)$ for other instances. Hence, the bifactor approximation
ratio for Algorithm \ref{alg:basic-appro} is actually $(2,2)$.

In real networks, the two given constraints may not be of equal importance,
say, delay is far more important comparing to cost. In this case,
applications require that the delay of the resulting solution is bounded
by $(1+\beta)D$, where $0<\beta<1$ is a positive real number. Apparently,
we could get an algorithm similar to Algorithm \ref{alg:basic-appro}
excepting setting the new cost as $b(e)=\beta\frac{c(e)}{C}+\frac{d(e)}{D}$.
The ratio of the new algorithm is given as below:
\begin{cor}
\label{cor:By-setting-the}By setting the new cost as $b(e)=\beta\frac{c(e)}{C}+\frac{d(e)}{D}$
for a given real number $0<\beta<1$ , Algorithm \ref{alg:basic-appro}
returns $k$ paths with delay-sum bounded by $\alpha D$ and cost-sum
bounded by $\frac{1+\beta-\alpha}{\beta}*C$ , where $0\leq\alpha\leq1+\beta$
is a real number. Therefore the ratio of the algorithm is $(1+\beta,\,1+\frac{1}{\beta})$.
\end{cor}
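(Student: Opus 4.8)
**The plan is to mirror the proof of Theorem \ref{thm:baseAlgo} almost verbatim, tracking how the new weighting $b(e)=\beta\frac{c(e)}{C}+\frac{d(e)}{D}$ changes the bookkeeping.** First I would observe that for any $st$-path system respecting both constraints, and in particular for $OPT$, we have $\sum_{e\in OPT}b(e)=\beta\frac{c(OPT)}{C}+\frac{d(OPT)}{D}\leq\beta+1$, since $c(OPT)\leq C$ and $d(OPT)\leq D$. Because Algorithm \ref{alg:basic-appro} (with the modified cost) computes $k$ disjoint paths of minimum total $b$-cost via the Suurballe--Tarjan procedure, the solution $SOL$ satisfies $\sum_{e\in SOL}b(e)\leq\sum_{e\in OPT}b(e)\leq 1+\beta$, in exact analogy with Inequality (\ref{eq:b(e) bound}).

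Next I would set up the same decomposition of the $b$-cost of $SOL$ in terms of $\alpha$. Writing $\alpha$ for the ratio of the delay-sum of $SOL$ to $d(OPT)$, so that $d(SOL)=\alpha\, d(OPT)$, I would expand
\begin{equation}
\sum_{e\in SOL}b(e)=\beta\frac{c(SOL)}{C}+\frac{d(SOL)}{D}=\beta\frac{c(SOL)}{C}+\alpha\frac{d(OPT)}{D}.
\end{equation}
Since the minimization drives the delay-sum down, one checks that $0\leq\alpha\leq 1+\beta$ is the admissible range here (the upper end being attained when all the $b$-budget is spent on delay). Combining with the bound $\sum_{e\in SOL}b(e)\leq 1+\beta$ and using $d(OPT)\leq D$ gives $\beta\frac{c(SOL)}{C}\leq 1+\beta-\alpha$, hence $c(SOL)\leq\frac{1+\beta-\alpha}{\beta}\,C$, which is the claimed cost-sum bound.

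Finally I would extract the stated bifactor ratio. The delay-sum is bounded by $\alpha D$; since $\alpha$ can be as large as $1+\beta$ across instances, the delay factor is $1+\beta$. For the cost factor, the worst case of $\frac{1+\beta-\alpha}{\beta}$ over the admissible range occurs at the smallest $\alpha$, namely $\alpha=0$, giving $\frac{1+\beta}{\beta}=1+\frac{1}{\beta}$. Thus the algorithm is a $(1+\beta,\,1+\frac{1}{\beta})$-approximation, exactly as in Theorem \ref{thm:baseAlgo} with $\beta=1$ recovering the $(2,2)$ bound.

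**The one point needing care**, and the only real obstacle, is justifying the admissible range $0\leq\alpha\leq 1+\beta$ and the direction of the inequality that converts the $b$-cost bound into a per-constraint bound; this is exactly the analogue of the remark following Theorem \ref{thm:baseAlgo} that $\alpha$ varies by instance, so the honest worst-case factors are read off from the extreme values of $\alpha$ rather than from a single instance. Everything else is a direct substitution of $\beta\frac{c(e)}{C}$ for $\frac{c(e)}{C}$ in the earlier argument.
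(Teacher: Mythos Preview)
Your proposal is correct and follows precisely the approach the paper intends: the paper omits the proof entirely, stating only that it is ``very similar to the proof of Theorem~\ref{thm:baseAlgo}'', and your argument is exactly that proof with $b(e)=\beta\frac{c(e)}{C}+\frac{d(e)}{D}$ substituted throughout and the bound $2$ replaced by $1+\beta$. The one minor clean-up is to set $\alpha=d(SOL)/D$ rather than $\alpha=d(SOL)/d(OPT)$, so that the step from $\sum_{e\in SOL} b(e)\leq 1+\beta$ to $\beta\,\frac{c(SOL)}{C}\leq 1+\beta-\alpha$ goes through directly without the inequality-direction worry you flagged (with your definition, $d(OPT)\leq D$ pushes the bound the wrong way); this is the same sloppiness already present in the paper's proof of Theorem~\ref{thm:baseAlgo}, and with that adjustment your argument is airtight.
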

The proof of Corollary \ref{cor:By-setting-the} is omitted here,
since it is very similar to the proof of Theorem \ref{alg:basic-appro}.
According to Corollary \ref{cor:By-setting-the}, our algorithm can
bound the delay-sum of the $k$-disjoint path by $(1+\beta)D$ for
any $0<\beta<1$, by relaxing the cost constraint to $(1+\frac{1}{\beta})*C$.
For example, if $\beta=0.01$, then the bifactor approximation ratio
of the algorithm is $(1.01,\,101)$. Thus, the algorithm decrease
the delay of the $k$-disjoint paths at a high price. In the next
subsection, we shall develop an improved method that pays less to
make delay-sum of the $k$-disjoint paths bounded by $(1+\beta)D$.

\subsection{The improving phase}

To make the delay of the solution resulting from Algorithm \ref{alg:basic-appro}
bounded by $(1+\beta)$D, our improving phase is, basically a greedy
method, using the so-called cycle cancellation to improve the disjoint
paths in iterations until a solution with the best possible ratio
$(1+\beta,\,\max\{2,1+\ln\frac{1}{\beta}\})$ is obtained. The cycle
cancellation method is an approach of using cycles to change the edges
of the disjoint paths, which first appears in \cite{orda2004efficient}
and is derived from the following proposition that can be immediately
obtained from flow theory \cite{ahuja1993network}:
\begin{prop}
\label{pro:cycle}Let $P_{1},\, P_{2}\,\dots,\, P_{k}$ and $Q_{1},\, Q_{2}\,\dots,Q_{k}$
be two sets of $k$ disjoint $st$-paths in $G$, $\overline{G}$
be $G$ excepting that all edges of $P_{1},\, P_{2}\,\dots,P_{k}$
are reversed, and $O$ be a cycle in $\overline{G}$. Then
\begin{enumerate}
\item The edges of $P_{1},\, P_{2}\,\dots,\, P_{k}$ and $O$, excepting
the pairs of parallel edges with opposite direction, compose $k$-disjoint
paths;
\item There exist a set of edge disjoint cycles $O_{1},\,\dots,\, O_{h}$
in $\overline{G}$, such that the edges of $P_{1},\, P_{2}\,\dots,\, P_{k}$
and $O_{1},\,\dots,\, O_{h}$, excepting the pairs of parallel edges
with opposite direction, compose $Q_{1},\, Q_{2}\,\dots,Q_{k}$.
\end{enumerate}
\end{prop}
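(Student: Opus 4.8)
The plan is to prove Proposition~\ref{pro:cycle} by appealing to standard network flow theory, since the statement is essentially a restatement of flow decomposition and augmentation facts phrased in terms of disjoint paths. The key conceptual step is to translate the two families of $k$ disjoint $st$-paths into unit flows. First I would regard $P_1,\dots,P_k$ as a feasible integral $st$-flow $f$ of value $k$ (each edge carrying one unit), and similarly $Q_1,\dots,Q_k$ as a flow $g$ of value $k$. Because the paths within each family are vertex- (hence edge-) disjoint, these are genuine $0/1$ flows. The residual/reversed graph $\overline{G}$ described in the statement is exactly the portion of the residual network of $f$ obtained by reversing the $P_i$ edges, so cycles in $\overline{G}$ correspond to circulations in the residual network.

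For part~1, I would argue that adding the indicator circulation of a single cycle $O\subseteq\overline{G}$ to $f$ yields another feasible integral $st$-flow $f'$ of the \emph{same} value $k$: a cycle conserves flow at every vertex, so no excess is created at $s$ or $t$. Whenever $O$ traverses a reversed edge of some $P_i$, that cancels the corresponding forward unit (these are the ``pairs of parallel edges with opposite direction'' to be discarded); whenever $O$ uses a new forward edge, it is added. The resulting $0/1$ flow $f'$ again decomposes into $k$ disjoint $st$-paths, which is precisely the claimed set of $k$-disjoint paths. The main thing to check here is that $f'$ remains a simple $0/1$ flow, i.e. that after cancellation no edge carries two units; this follows because $O$, being a single cycle in the residual graph, uses each residual arc at most once.

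For part~2, the idea is to consider the difference $g-f$, which is a circulation (value $0$ since both $f$ and $g$ have value $k$) supported on $\overline{G}$. By the classical flow-decomposition theorem \cite{ahuja1993network}, any integral circulation decomposes into a finite collection of edge-disjoint directed cycles $O_1,\dots,O_h$ in $\overline{G}$. Superimposing all of these on $P_1,\dots,P_k$ and cancelling opposite parallel edges transforms $f$ into $f+(g-f)=g$, whose path decomposition is exactly $Q_1,\dots,Q_k$. I would note that edge-disjointness of the $O_i$ is what lets us apply part~1 repeatedly (or all at once) without ambiguity in the cancellation bookkeeping.

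The main obstacle I anticipate is purely bookkeeping rather than conceptual: making the ``excepting the pairs of parallel edges with opposite direction'' operation precise and verifying it coincides with flow addition in the residual network. In particular one must be careful that $\overline{G}$ as defined (forward edges plus reversed $P_i$ edges) is the correct residual structure and that cancellation of a forward edge against its reverse corresponds exactly to decrementing $f$ on that edge. Once the correspondence ``disjoint $st$-paths $\leftrightarrow$ $0/1$ $st$-flow of value $k$'' and ``cycle in $\overline{G}$ $\leftrightarrow$ residual circulation'' is nailed down, both parts follow directly from flow augmentation (part~1) and integral flow decomposition (part~2), so I would keep the exposition short and cite \cite{ahuja1993network} for the decomposition theorem rather than reprove it.
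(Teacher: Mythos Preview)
Your approach is exactly what the paper intends: the paper does not give a proof of Proposition~\ref{pro:cycle} at all but simply states that it ``can be immediately obtained from flow theory \cite{ahuja1993network}'', and your translation of the two path families into $0/1$ $st$-flows, with cycles in $\overline{G}$ as residual circulations and part~2 via decomposition of the circulation $g-f$, is precisely that derivation spelled out. There is nothing to compare against beyond the citation, and your write-up is consistent with it.
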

From the proposition above, it is obvious that there exists a set
of cycles $O_{1},\,\dots,\, O_{h}$ that can improve $k$ disjoint
QoS paths $P_{1},\, P_{2}\,\dots,P_{k}$ to an optimal solution. However,
it is hard to identify all the cycles $O_{1},\,\dots,\, O_{h}$, so
we employ a greedy approach to compute a set of cycles to obtain an
approximation approach. The improving phase is composed by iterations,
each of which computes a cycle and then uses it to improve $P_{1},\, P_{2}\,\dots,\, P_{k}$.
More precisely, to obtain a good ratio, the algorithm computes in
iteration $j$ a cycle $O_{j}$ with $\frac{d(O_{j})}{c(O_{j})}$
minimized among the cycles in $\overline{G}$. The layout of the algorithm
is as given in Algorithm \ref{alg:improved-algo}.

\begin{algorithm}
\textbf{Input:} A graph $G=(V,E)$, each edge $e$ with cost $c(e)$
and delay $d(e)$, a given cost constraint $C\in Z^{+}$ and delay
constraint $D\in Z^{+}$, disjoint QoS paths $P_{1},\, P_{2}\,\dots,P_{k}$
computed by Algorithm \ref{alg:basic-appro};

\textbf{Output:} Improved disjoint QoS paths $Q_{1},\, Q_{2}\,\dots,Q_{k}$.
\begin{enumerate}
\item \textbf{If} $\sum_{i=1}^{k}d(P_{i})\leq(1+\beta)D$ ;

\textbf{then} return $P_{1},\, P_{2}\,\dots,P_{k}$ as $Q_{1},\, Q_{2}\,\dots,Q_{k}$
, terminate;

\item Reverse direction of the edges of $P_{1},\, P_{2}\,\dots,P_{k}$ in
$G$ , set their cost to a small positive real number $0<\epsilon<\frac{1}{mnD}$,
and negative their delay;

\item Compute cycle $O_{j}$ with $c(O_{j})\leq C$, $d(O_{j})<0$ and $\frac{d(O_{j})}{c(O_{j})}$
attaining minimum, by the method given in next section;

/{*} Following clause 2 of Proposition \ref{pro:cycle}, if $\sum_{i=1}^{k}d(P_{i})\geq d(OPT)$
and $\sum_{i=1}^{k}c(P_{i})\geq0$, there always exist cycle $O_{j}$
with $c(O_{j})\leq C$ and $d(O_{j})<0$. {*}/

\item Improve $P_{1},\, P_{2}\,\dots,P_{k}$ by adding the edges of $O_{j}$
and removing the pairs of parallel edges in opposite direction;
\item Go to Step 1.
\end{enumerate}
\caption{\label{alg:improved-algo}An improved algorithm based on cycle cancellation. }
\end{algorithm}

Following clause 1 of Proposition \ref{pro:cycle}, Algorithm \ref{alg:improved-algo}
will correctly return $k$ disjoint paths. It remains to show the
cost and delay of the $k$ disjoint paths is constrained as below:
\begin{thm}
\label{thm:The-ratio-ofHD}The approximation ratio of Algorithm 2
is $(1+\beta,\:\max\{2,\,1+\ln\frac{1}{\beta}\})$. \end{thm}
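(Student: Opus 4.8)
The plan is to prove the two factors separately. The delay factor $1+\beta$ is essentially free: Algorithm~\ref{alg:improved-algo} returns only through Step~1, whose guard is exactly $\sum_{i}d(P_{i})\le(1+\beta)D$, so any output obeys the delay bound (granting that the loop terminates, which holds because every cancellation strictly lowers the delay). All the real effort goes into the cost factor. I would fix an optimal solution $OPT$, write $d(OPT)\le D$ and $c(OPT)\le C$, and split into two cases according to whether the improving loop is ever entered. If it is not, the returned paths are exactly those of Algorithm~\ref{alg:basic-appro}, whose cost is at most $(2-\alpha)c(OPT)\le 2C$ by Theorem~\ref{thm:baseAlgo}; this is the source of the ``$2$'' in the stated ratio. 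So I would concentrate on the case where the loop runs and bound the \emph{total} cost increment accumulated over all cancellations.

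The heart of the argument is a per-iteration ``price of delay'' estimate. Consider the start of iteration $j$, with current delay $d_{cur}>(1+\beta)D\ge d(OPT)$. Applying clause~2 of Proposition~\ref{pro:cycle} to the current paths and to $OPT$ produces edge-disjoint cycles $O'_{1},\dots,O'_{h}$ in $\overline{G}$ that transform the current paths into $OPT$; in the auxiliary weights they satisfy $\sum_{\ell}d(O'_{\ell})=d(OPT)-d_{cur}<0$, and, since their forward edges are $OPT$-edges while their reversed edges carry cost $\epsilon$, also $\sum_{\ell}c(O'_{\ell})\le c(OPT)+\epsilon mn$. By the mediant inequality $\min_{\ell}\frac{d(O'_{\ell})}{c(O'_{\ell})}\le\frac{\sum_{\ell}d(O'_{\ell})}{\sum_{\ell}c(O'_{\ell})}$, some decomposition cycle, and hence the globally optimal cycle $O_{j}$ selected in Step~3, obeys
\[
\frac{d(O_{j})}{c(O_{j})}\;\le\;\frac{d(OPT)-d_{cur}}{c(OPT)}\;<\;0 .
\]
Such a cycle is feasible for Step~3 (its real cost is at most $c(OPT)\le C$ and its delay is negative), so the algorithm can indeed select it. Cancelling $O_{j}$ lowers the delay by $\delta_{j}:=-d(O_{j})>0$ and raises the cost by at most $c(O_{j})$ (reversed edges only subtract cost), so the incremental cost is at most $\frac{\delta_{j}\,c(OPT)}{d_{cur}-d(OPT)}$.

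With this I would pass to the integral, which is where the $\ln\frac{1}{\beta}$ appears. Viewing the increment as a function of the decreasing delay variable $s$, and using that $s\mapsto\frac{c(OPT)}{s-d(OPT)}$ is decreasing, the step-$j$ increment is at most $\int_{d_{cur}-\delta_{j}}^{d_{cur}}\frac{c(OPT)}{s-d(OPT)}\,ds$; summing the telescoping intervals from the final delay $(1+\beta)D$ up to the initial delay $d_{0}=\alpha\,d(OPT)$ bounds the total increment by
\[
\int_{(1+\beta)D}^{\alpha\,d(OPT)}\frac{c(OPT)}{s-d(OPT)}\,ds \;=\; c(OPT)\,\ln\frac{\alpha\,d(OPT)-d(OPT)}{(1+\beta)D-d(OPT)} .
\]
This is increasing in $d(OPT)$, so the worst case is $d(OPT)=D$, $c(OPT)=C$, giving increment at most $C\ln\frac{\alpha-1}{\beta}$. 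Adding the starting cost $(2-\alpha)C$ gives a total of at most $\bigl[(2-\alpha)+\ln\frac{\alpha-1}{\beta}\bigr]C$, and invoking $\ln(\alpha-1)\le\alpha-1$ (valid since $\alpha-1\le1$ here) collapses this to $\bigl(1+\ln\frac{1}{\beta}\bigr)C$. Combined with the ``loop not entered'' bound $2C$, the cost is at most $\max\{2,\,1+\ln\frac{1}{\beta}\}\,C$, as claimed.

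I expect the genuinely delicate points to be the justification of the per-iteration estimate and the discrete-to-continuous passage, not the final calculus. Specifically I would need to argue that (i) the cycle minimizing $d(O_{j})/c(O_{j})$ over \emph{all} feasible cycles is at least as good as the best cycle in the $OPT$-decomposition, so that comparing against $OPT$ is legitimate (here the point is that the mediant bound forces the minimizing decomposition cycle to have negative delay and cost at most $C$, hence to be a competitor the algorithm could pick); (ii) the perturbation from the $\epsilon$-costs on reversed edges, with $\epsilon<\frac{1}{mnD}$, is negligible and disturbs neither the feasibility constraint $c(O_{j})\le C$ nor the ratio estimate; and (iii) the last cancellation may overshoot below $(1+\beta)D$, so the summed intervals cover slightly more than $[(1+\beta)D,\,d_{0}]$ and this final step must be absorbed separately (for instance by bounding that one cycle's cost directly by $C$). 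The interplay in (i) between the global choice of $O_{j}$ and the structural guarantee of Proposition~\ref{pro:cycle} is where I anticipate the real work.
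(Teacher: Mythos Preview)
Your proposal is correct and follows the same route as the paper: case split on whether the loop is entered, per-iteration ratio bound $\frac{d(O_{j})}{c(O_{j})}\le\frac{d(OPT)-d_{cur}}{C}$ obtained from the cycle decomposition of Proposition~\ref{pro:cycle} (your mediant argument is exactly the justification the paper leaves implicit), telescoping/integral estimate for the accumulated cost, separate treatment of the final overshooting cancellation by bounding its cost by~$C$, and optimization over~$\alpha$. One small reconciliation to make: once you absorb the last step as in your point~(iii), the total becomes $\bigl(3-\alpha+\ln\frac{\alpha-1}{\beta}\bigr)C$ rather than $\bigl(2-\alpha+\ln\frac{\alpha-1}{\beta}\bigr)C$, so to reach $(1+\ln\frac{1}{\beta})C$ you need the sharper $\ln(\alpha-1)\le(\alpha-1)-1=\alpha-2$ (equivalently, maximize $3-\alpha+\ln(\alpha-1)$ directly at $\alpha=2$, which is what the paper does) instead of the weaker $\ln(\alpha-1)\le\alpha-1$ you quoted.
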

\begin{proof}
For the case that $\sum_{i=1}^{k}d(P_{i})\leq(1+\beta)D$ holds before
the improving phase, the approximation ratio of Algorithm \ref{alg:improved-algo}
is obviously $(1+\beta,\,2)$.

It remains to show the ratio of the algorithm is $(1+\beta,\:1+\ln\frac{1}{\beta})$
for the case that $\sum_{i=1}^{k}d(P_{i})>(1+\beta)D$. Assume that
Algorithm \ref{alg:improved-algo} runs in $h$ iterations, the key
idea of the proof is to sum up the cost increment while using the
cycle to improve the $k$ disjoint paths in iterations, and show that
the cost sum is bounded (by giving the cost sum a close form).

Note that in the case, we have $\alpha D\geq\sum_{i=1}^{k}d(P_{i})>(1+\beta)D$,
so $\alpha>1+\beta$ holds. Let $\Delta D=d(OPT)-d(SOL)\geq(1-\alpha)d(OPT)$
and $\Delta C=c(OPT)-c(SOL)\leq(\alpha-1)c(OPT)$. Clearly, $\Delta D<0$
and $\Delta C>0$ hold. Let the cycle computed in the $j$th iteration
be $O_{j}$, then since $\frac{d(O_{j})}{c(O_{j})}$ attains minimum
in Step 3 of Algorithm \ref{alg:improved-algo}, we have $\frac{d(O_{j})}{c(O_{j})}\leq\frac{\Delta D-\sum_{i=1}^{j-1}d(O_{i})}{C}.$
That is,
\[
c(O_{j})\leq\frac{d(O_{j})}{\Delta D-\sum_{i=1}^{j-1}d(O_{i})}C.
\]
 By summing up $c(O_{j})$ in $h-1$ iterations (excluding the last
iteration), we have:

\[
\sum_{j=1}^{h-1}c(O_{j})\leq C\sum_{j=1}^{h-1}\frac{d(O_{j})}{\Delta D-\sum_{i=1}^{j-1}d(O_{i})}.
\]

Following the definition of Definite Integral, we have:

\begin{equation}
\sum_{j=1}^{h-1}\frac{d(O_{j})}{\Delta D-\sum_{i=1}^{j-1}d(O_{i})}=\sum_{j=1}^{h-1}\frac{1}{\Delta D-\sum_{i=1}^{j-1}d(O_{i})}d(O_{j})\leq\int_{\Delta D}^{\Delta D-\sum_{i=1}^{h-1}d(O_{i})}\frac{1}{x}dx,\label{eq:sum1}
\end{equation}

where the maximum is attained when $d(O_{j})=-1$ for every $j$.

Algorithm \ref{alg:improved-algo} terminates when $d(SOL)+\sum_{i=1}^{h}d(O_{i})\leq(1+\beta)D$,
so in the $h-1$ iterations $d(SOL)+\sum_{i=1}^{h-1}d(O_{i})>(1+\beta)D$
holds. That is $d(SOL)-D+\sum_{i=1}^{h-1}d(O_{i})>\beta D$, and hence
$-\Delta D+\sum_{i=1}^{h-1}d(O_{i})>\beta D>0$ holds. So we obtain
a close form for the cost sum of the $h-1$ iterations:

\begin{equation}
\int_{\Delta D}^{\Delta D-\sum_{i=1}^{h-1}d(O_{i})}\frac{1}{x}dx=\int_{-\Delta D+\sum_{i=1}^{h-1}d(O_{i}))}^{-\Delta D}\frac{1}{x}dx\leq\int_{\beta D}^{-\Delta D}\frac{1}{x}dx=\ln\frac{|\Delta D|}{\beta D}=\ln\frac{\alpha-1}{\beta}.\label{eq:sum}
\end{equation}

At last, the cost increment in the $h$th iteration is bounded by
$c(OPT)$. So the final cost is $c(SOL_{2})\leq(2-\alpha)C+C\ln\frac{\alpha-1}{\beta}+C=C(3-\alpha+\ln\frac{\alpha-1}{\beta})$,
where $SOL_{2}$ is the solution resulting from Algorithm \ref{alg:improved-algo}.

Let $f(\alpha)=3-\alpha+\ln\frac{\alpha-1}{\beta}$. Remind that $\alpha\leq2$,
so $f'(\alpha)=\frac{1}{\alpha-1}-1>0$, $f(\alpha)$ is monotonous
increasing on $\alpha$, and attains maximum while $\alpha=2$. So
we have $c(SOL_{2})\leq(1+\ln\frac{1}{\beta})c(OPT)$.

Therefore, the cost of the output of Algorithm \ref{alg:improved-algo}
is bounded by $(1+\ln\frac{1}{\beta})c(OPT)$, and delay bounded by
$(1+\beta)d(OPT)$. This completes the proof.
\end{proof}
From Theorem \ref{thm:The-ratio-ofHD}, by setting $1+\ln\frac{1}{\beta}=2$,
we can immediately obtain an improved algorithm with best possible
delay ratio under the same cost bound $2C$. That is:
\begin{cor}
By setting $1+\ln\frac{1}{\beta}=2$, we have $\beta=\frac{1}{e}$,
and hence Algorithm \ref{alg:improved-algo} is now with a bifactor
approximation ratio of $(1+\frac{1}{e},\,2)=(1.369,\,2)$.
\end{cor}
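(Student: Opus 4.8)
The plan is to derive this corollary as a direct specialization of Theorem~\ref{thm:The-ratio-ofHD}, so the real work is to solve a single equation for $\beta$ and substitute. First I would recall that the theorem guarantees a bifactor ratio of $(1+\beta,\,\max\{2,\,1+\ln\frac{1}{\beta}\})$ for every admissible $\beta$, and I would make explicit the tension between the two factors: the delay factor $1+\beta$ is increasing in $\beta$, whereas the cost contribution $1+\ln\frac{1}{\beta}$ is strictly decreasing in $\beta$ on $(0,1]$. Hence, to keep the cost factor pinned at its floor value $2$ (below which the outer $\max$ is inactive) while simultaneously making the delay factor as small as possible, the natural choice of $\beta$ is the one at which the two expressions inside the $\max$ coincide.

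Next I would solve $1+\ln\frac{1}{\beta}=2$, i.e. $\ln\frac{1}{\beta}=1$, which yields $\frac{1}{\beta}=e$ and therefore $\beta=\frac{1}{e}$. Substituting this value back into the theorem's ratio gives a cost factor $\max\{2,\,1+\ln e\}=\max\{2,\,2\}=2$ and a delay factor $1+\beta=1+\frac{1}{e}$. A short numerical evaluation $1+\frac{1}{e}\approx 1.3679$ then produces the stated pair $(1.369,\,2)$, completing the derivation. To also justify the accompanying claim that this is the best possible delay factor at cost bound $2C$, I would argue by monotonicity: any $\beta'<\frac{1}{e}$ forces $1+\ln\frac{1}{\beta'}>2$, so the cost factor strictly exceeds $2$, whereas any $\beta'>\frac{1}{e}$ leaves the cost factor at $2$ but strictly increases $1+\beta'$; thus $\beta=\frac{1}{e}$ is exactly the threshold minimizing $1+\beta$ subject to the cost factor not exceeding $2$.

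I anticipate no genuine obstacle, since the statement is a pure substitution into an already-proved theorem. The only point that requires a line of care is observing that $1+\ln\frac{1}{\beta}$ is precisely the active term of the $\max$ up to $\beta=\frac{1}{e}$, which is what makes the chosen $\beta$ the correct balancing point rather than an arbitrary one; without this remark the equation $1+\ln\frac{1}{\beta}=2$ would look like an unmotivated ansatz rather than the solution of an optimization over $\beta$.
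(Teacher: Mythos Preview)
Your proposal is correct and follows essentially the same approach as the paper, which simply notes that the corollary is immediate from Theorem~\ref{thm:The-ratio-ofHD} by solving $1+\ln\frac{1}{\beta}=2$. Your added monotonicity argument for why $\beta=\frac{1}{e}$ gives the best delay ratio at cost bound $2C$ makes explicit what the paper only asserts in the sentence introducing the corollary.
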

For those applications in which delay and cost are of equal importance,
by setting $1+\ln\frac{1}{\beta}=1+\beta$ and slightly modifying
Algorithm \ref{alg:improved-algo} to improve either cost or delay
that is of worse ratio, we can obtain an improved algorithm with ratio
as in the following corollary:
\begin{cor}
If $1+\ln\frac{1}{\beta}=1+\beta$, Algorithm \ref{alg:improved-algo}
is with a bifactor approximation ratio of $(1.567,\,1.567)$.
\end{cor}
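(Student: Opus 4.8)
The plan is to reduce this corollary to Theorem~\ref{thm:The-ratio-ofHD} together with a mirror-image version of it obtained by interchanging the roles of cost and delay. First I would pin down $\beta$: setting $1+\ln\frac{1}{\beta}=1+\beta$ is equivalent to $\ln\frac{1}{\beta}=\beta$, i.e.\ $\beta e^{\beta}=1$, whose unique root in $(0,1)$ is the Lambert-$W$ value $\beta=W(1)\approx 0.567$, so that the common target factor is $1+\beta\approx 1.567$. This is the number appearing in the statement, and it fixes a single threshold $(1+\beta)$ against which both the delay-sum and the cost-sum will be measured.

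Next I would record the starting point supplied by Theorem~\ref{thm:baseAlgo}: the basic algorithm returns $k$ disjoint paths with delay-sum $\alpha D$ and cost-sum $(2-\alpha)C$ for some instance-dependent $\alpha\in[0,2]$. The ``slight modification'' of Algorithm~\ref{alg:improved-algo} I would make is to inspect this starting solution and run cycle cancellation against whichever constraint is the more badly violated: if $\sum_i d(P_i)>(1+\beta)D$ we run Algorithm~\ref{alg:improved-algo} unchanged (improving delay), whereas if $\sum_i c(P_i)>(1+\beta)C$ we run the mirror-image procedure in which the cost and delay labels are swapped throughout, so that each cycle $O_j$ is chosen to minimise $\frac{c(O_j)}{d(O_j)}$ and the bound $D$ plays the role formerly played by $C$.

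The core of the argument is then a case analysis on $\alpha$. When $\alpha>1+\beta$ the delay is the violated constraint, and Theorem~\ref{thm:The-ratio-ofHD} directly gives delay $\le(1+\beta)D$ and cost $\le(1+\ln\frac{1}{\beta})C=(1+\beta)C$. When $\alpha<1-\beta$ the cost-sum $(2-\alpha)C$ exceeds $(1+\beta)C$, and by the symmetric version of Theorem~\ref{thm:The-ratio-ofHD} we obtain cost $\le(1+\beta)C$ and delay $\le(1+\ln\frac{1}{\beta})D=(1+\beta)D$. In the remaining range $1-\beta\le\alpha\le1+\beta$ both $\alpha\le1+\beta$ and $2-\alpha\le1+\beta$ hold, so the basic solution already meets both bounds and no improvement is needed. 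A small point I would verify is that these cases are exhaustive and that the two ``violated'' cases cannot occur together, since $\alpha>1+\beta$ and $\alpha<1-\beta$ are incompatible for $\beta>0$. Collecting the cases yields the ratio $(1+\beta,1+\beta)=(1.567,1.567)$.

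The step I expect to be the main obstacle is making the appeal to the symmetric version of Theorem~\ref{thm:The-ratio-ofHD} genuinely rigorous rather than merely asserting symmetry. I would need to check that every ingredient of that proof transfers verbatim under $c\leftrightarrow d$ and $C\leftrightarrow D$: the existence of a cost-bounded, delay-decreasing cycle guaranteed by Proposition~\ref{pro:cycle}, the greedy minimisation of $\frac{d(O_j)}{c(O_j)}$, and the telescoping definite-integral estimate that produces the $\ln\frac{\alpha-1}{\beta}$ term. Since Proposition~\ref{pro:cycle} is a purely combinatorial statement about disjoint paths and cycles, and the mixed cost $b(e)=\frac{c(e)}{C}+\frac{d(e)}{D}$ underlying the construction is already symmetric in cost and delay, I expect the transfer to go through cleanly; nonetheless this is the one place where the roles of the two constraints are actually exchanged, so it is the point that must be spelled out carefully.
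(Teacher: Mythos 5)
Your proposal is correct and follows exactly the route the paper intends: the paper states the corollary with only the remark that one should ``improve either cost or delay that is of worse ratio,'' and your case analysis on $\alpha$ (delay-improving run for $\alpha>1+\beta$, the cost/delay-swapped mirror run for $\alpha<1-\beta$, no run otherwise), together with $\beta=W(1)\approx0.567$ solving $\beta e^{\beta}=1$, is precisely the fleshed-out version of that remark. Your identification of the symmetry transfer (Proposition~\ref{pro:cycle} and the integral estimate under $c\leftrightarrow d$, $C\leftrightarrow D$) as the one step needing care is apt, since the paper omits these details entirely.
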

Now we consider the case that $\beta=0$, i.e. the delay constraint
is strictly satisfied. In this case, Inequality (\ref{eq:sum}) in
the proof of Theorem \ref{thm:The-ratio-ofHD} will become $\sum_{j=1}^{h-1}\frac{d(O_{j})}{\Delta D-\sum_{i=1}^{j-1}d(O_{i})}\leq\int_{|\beta D=0|}^{|\Delta D|}\frac{1}{x}dx=\ln|\Delta D|\leq\ln D.$
So we have:
\begin{cor}
\label{cor:strict}When $\beta=0$, Algorithm \ref{alg:improved-algo}
is with a ratio of $(1,\, O(\ln n))$.
\end{cor}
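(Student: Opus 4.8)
The plan is to run the proof of Theorem~\ref{thm:The-ratio-ofHD} again with $\beta$ set to $0$ and to isolate the single place where the bound can deteriorate. Putting $\beta=0$ makes the delay factor $1+\beta$ equal to $1$, so Algorithm~\ref{alg:improved-algo} halts only once $\sum_i d(P_i)\le D$; the delay side of the ratio is therefore immediate and exact, and the whole argument reduces to re-estimating the cost.

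For the cost I would return to the chain of inequalities ending in~(\ref{eq:sum}). The parameter $\beta$ entered there only through the lower limit of the definite integral: the fact that the algorithm had not yet terminated after $h-1$ iterations gave $-\Delta D+\sum_{i=1}^{h-1}d(O_i)>\beta D$, which let us truncate the integral at $\beta D$. With $\beta=0$ this floor becomes $0$ and $\int_0^{|\Delta D|}\frac{1}{x}\,dx$ diverges, so a new lower limit is needed. The key observation is that all edge delays are positive integers, hence $d(SOL)$, $d(OPT)$ and every $d(O_i)$ are integral; since the non-termination condition $d(SOL)+\sum_{i=1}^{h-1}d(O_i)>D\ge d(OPT)$ forces $-\Delta D+\sum_{i=1}^{h-1}d(O_i)=d(SOL)-d(OPT)+\sum_{i=1}^{h-1}d(O_i)$ to be a \emph{positive} integer, it is at least $1$. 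Replacing the floor $\beta D$ by $1$ then yields $\sum_{j=1}^{h-1}\frac{d(O_j)}{\Delta D-\sum_{i=1}^{j-1}d(O_i)}\le\int_1^{|\Delta D|}\frac{1}{x}\,dx=\ln|\Delta D|$, and because $|\Delta D|\le(\alpha-1)d(OPT)\le D$ (using $\alpha\le 2$ and $d(OPT)\le D$) this is at most $\ln D$.

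Substituting this back into the cost accounting of Theorem~\ref{thm:The-ratio-ofHD} --- starting cost $\le(2-\alpha)C$, a further $C\ln D$ over the first $h-1$ iterations, and at most $c(OPT)$ in the last --- gives $c(SOL_2)\le(3-\alpha+\ln D)C=O(\ln D)\,C$, so Algorithm~\ref{alg:improved-algo} achieves the ratio $(1,\,O(\ln D))$ with the delay held exactly at factor $1$. The remaining step is to read $O(\ln D)$ as $O(\ln n)$, and I expect this to be the delicate part. My plan is to invoke the same weight-rescaling device already used in the paper to pass from pseudo-polynomial to polynomial time, which shrinks the controlling magnitude to a polynomial in $n$ and hence turns $\ln D$ into $O(\ln n)$. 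The hard part will be doing this without reopening the delay budget: the entire point of the case $\beta=0$ is that the delay factor equals $1$ exactly, so the rescaling must be arranged (or its rounding error absorbed) so that strict delay feasibility is preserved while only the cost logarithm is brought down to $O(\ln n)$.
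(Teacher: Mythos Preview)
Your argument is essentially the paper's own: it, too, simply reruns the estimate of Theorem~\ref{thm:The-ratio-ofHD} with $\beta=0$, replaces the lower limit of the integral in~(\ref{eq:sum}) to obtain $\ln|\Delta D|\le\ln D$, and then states the corollary. You are in fact more careful than the paper in two places. First, you make explicit why the lower limit can be taken as~$1$ (integrality of delays forces $-\Delta D+\sum_{i=1}^{h-1}d(O_i)\ge 1$), whereas the paper writes the somewhat cryptic lower limit ``$|\beta D=0|$'' and passes directly to $\ln|\Delta D|$. Second, your worry about turning $O(\ln D)$ into $O(\ln n)$ is legitimate: the paper does not justify that step either---it derives $\ln D$ and then simply asserts $O(\ln n)$---so do not expect to find the missing argument there. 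Your instinct that a delay rescaling would threaten the exact delay factor~$1$ is correct; the honest reading is that the corollary really gives $(1,\,O(\ln D))$, with $O(\ln n)$ following only under an implicit assumption that $D$ is polynomially bounded in~$n$.
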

From Corollary \ref{cor:strict}, we can see that the price of obeying
one constraint strictly is very high, i.e. it requires extra $O(\ln n)$
times of cost. However, this is the first algorithm with logarithmic
factor approximation ratio for the $k$-BCP problem with strict delay
constraint.

\section{Computing Cycle $O_{j}$ with minimum $\frac{d(O_{j})}{c(O_{j})}$}

Let $\overline{G}=(V,E)$ be $G$, excepting that the edges of $P_{1},\, P_{2}\,\dots,P_{k}$
are with direction reversed, cost sat to 0, and delay negatived. This
section will show how to compute a cycle $O$ with cost bounded by
$C$ and $\frac{d(O)}{c(O)}$ minimized in $\overline{G}$. The key
idea is firstly to construct an auxiliary graphs $H(v)$ for each
$v$ where every cycle is with cost at most $C$, secondly to compute
the cycle $O'$ with minimum $\frac{d(O')}{c(O')}$ among all cycles
in all $H(v)$s for each $v\in\overline{G}$, and thirdly to obtain
cycle $O$ with minimum $\frac{d(O(v))}{c(O(v))}$ in $\overline{G}$
according to $O'$.

\subsection{Construction of auxiliary graph $H(v)$}

The algorithm of constructing the auxiliary  graph $H(v)$ is inspired
by the method of computing a single path subject to multiple constraints
\cite{xue2008polynomial}. The full layout of the algorithm is as
shown in Algorithm \ref{alg:Construction-of-auxiliary} (An example
of such construction is as depicted in Figure \ref{fig:Construction-of-acyclic}).

\begin{algorithm}
\textbf{Input}: Graph $\overline{G}=(V,E)$, two distinct vertices
$s,\, t\in V$, a cost $c:\, e\rightarrow Z_{0}^{+}$ and a delay
$d:\, e\rightarrow Z_{0}^{+}$ on every edge $e\in E$, a cost constraint
$C$ and a delay constraint $D$;

\textbf{Output}: Auxiliary  graph $H(v)$.
\begin{enumerate}
\item For every vertex $v_{l}$ of $V$, add to $H(v)$ vertices $v_{l}^{1},\dots,v_{l}^{C}$
;
\item For every edge $e=\left\langle v_{j},v_{l}\right\rangle \in E$, add
to $H(v)$ the edges $\left\langle v_{j}^{1},\, v_{l}^{c(e)+1}\right\rangle ,\,\dots,\left\langle v_{j}^{C-c(e)},\, v_{l}^{C}\right\rangle $,
each of which is with cost $c(e)$ and delay $d(e)$;

/{*}Note that $d(e)$ can be negative in $\overline{G}=(V,E)$.{*}/

\item For all $i=2,\,\dots,\, C$, add to $H(v)$ backward edge $\left\langle v^{i},\, v^{1}\right\rangle $
with delay 0 and cost 0, where a backward edge is an edge $\left\langle v^{i},\, v^{j}\right\rangle $
where $i>j$.

/{*}$H(v)$ contains backward edges, and hence cycles, only after
adding the edges of Step 3.{*}/

\end{enumerate}
\caption{\label{alg:Construction-of-auxiliary}Construction of auxiliary graph
$H$.}
\end{algorithm}

Following Algorithm \ref{alg:Construction-of-auxiliary}, every backward
edge in the constructed auxiliary graph $H(v)$ must contain vertex
$v^{1}$. Hence every cycle in $H(v)$ contains at most one backward
edge. On the other hand, following Algorithm \ref{alg:Construction-of-auxiliary}
a cycle in $H(v)$ contains at least one backward edge. Therefore,
there exist exactly one backward edge in any cycle of $H(v)$. Because
$H(v)\setminus\{\left\langle v^{2},\, v^{1}\right\rangle ,\dots,\left\langle v^{C},\, v^{1}\right\rangle \}$
is an acyclic graph where any path is with cost at most $C$, we have:
\begin{lem}
\label{lem:cyclecost}Any cycle in $H(v)$ is with cost at most $C$.
\end{lem}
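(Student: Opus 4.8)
The plan is to exploit the structural observations already established immediately before the statement, namely that every cycle in $H(v)$ contains exactly one backward edge, which must be of the form $\langle v^{i},\, v^{1}\rangle$ for some $2\leq i\leq C$ and which carries cost $0$. Given an arbitrary cycle $O$ in $H(v)$, I would first isolate this unique backward edge $\langle v^{i},\, v^{1}\rangle$ and delete it, so that $O$ breaks into this single cost-$0$ edge together with a path $\pi$ from $v^{1}$ to $v^{i}$ lying entirely in the acyclic subgraph $H(v)\setminus\{\langle v^{2},\, v^{1}\rangle,\dots,\langle v^{C},\, v^{1}\rangle\}$. Since $c(O)$ equals $c(\pi)$ plus the cost of the backward edge, bounding $c(O)$ reduces to bounding $c(\pi)$.

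The core of the argument is to read the superscript index as a running total of accumulated cost. By Step 2 of Algorithm \ref{alg:Construction-of-auxiliary}, each forward edge goes from some $v_j^{p}$ to $v_l^{p+c(e)}$ and carries cost exactly $c(e)$, so for every forward edge the increase in superscript equals the edge's cost. I would therefore traverse $\pi$ from $v^{1}$ to $v^{i}$, writing the sequence of visited superscripts as $1=i_0,i_1,\dots,i_r=i$, and observe that the cost of each step is precisely $i_{s+1}-i_{s}$. Summing telescopes to $c(\pi)=\sum_{s}(i_{s+1}-i_{s})=i-1$. Combining this with the cost-$0$ backward edge yields $c(O)=i-1$, and since $i\leq C$ we conclude $c(O)\leq C-1\leq C$, establishing the lemma.

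The only delicate point, which I would state carefully rather than wave at, is the telescoping identity: it holds precisely because the construction forces the superscript to advance by exactly $c(e)$ along every forward edge, so no forward path can accumulate cost exceeding the gap between its endpoint layers. This is also where one must confirm that $\pi$ uses no further backward edges; but that is guaranteed, since all backward edges terminate at $v^{1}$, and a backward edge on $\pi$ beyond the one already removed would force a second visit to $v^{1}$, contradicting that $O$ is a cycle containing exactly one backward edge. Hence no genuine obstacle remains, and the proof is short once the layer-as-cumulative-cost viewpoint is made explicit.
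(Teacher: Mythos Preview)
Your proposal is correct and follows essentially the same approach as the paper: the paper's justification (given in the paragraph immediately preceding the lemma) is precisely that every cycle contains exactly one backward edge and that the remaining acyclic layered graph carries only paths of cost at most $C$. You have simply made explicit the telescoping ``superscript equals accumulated cost'' argument that the paper leaves implicit, and in doing so you even sharpen the bound to $c(O)=i-1\leq C-1$.
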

\begin{figure}
\begin{centering}
\includegraphics[width=0.7\textwidth]{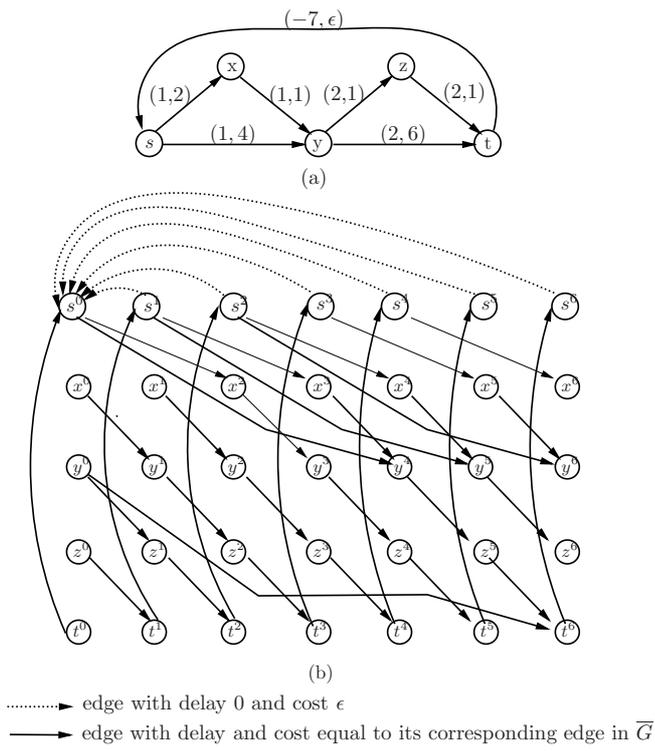}
\par\end{centering}

\caption{\label{fig:Construction-of-acyclic}Construction of auxiliary  graph
$H(v=s)$ with cost constraint $C=6$: (a) graph $\overline{G}$;
(b) auxiliary graph $H(v=s)$. The cycle $O=syts$ in $\overline{G}$
is exclude in the auxiliary graph $H(s)$ as shown in (b), keeping
the cost of $k$ disjoint paths constrained by $C=6$.}
\end{figure}

Let $O(v)$ be a cycle in $H(v)$, then following the construction
of $H(v)$, $O(v)$ apparently corresponds to a set of cycles in $\overline{G}$.
Conversely, every cycle containing $v$ in $\overline{G}$ corresponds
to a cycle in $H(v)$. Based on the observation, the following lemma
gives the key idea of computing a cycle $O$ of $\overline{G}$ with
$\frac{d(O)}{c(O)}$ minimized and cost bounded by $C$:
\begin{lem}
\label{lem:cycleinH} Let $O(v_{i})$ be a cycle with minimum $\frac{d(O(v_{i}))}{c(O(v_{i}))}$
in $H(v_{i})$, and $O(v)$ be the cycle with minimum $\frac{d(O(v))}{c(O(v))}$
among the $n$ cycles $O(v_{1}),\dots,O(v_{n})$. Assume $O$ is a
cycle with minimum $\frac{d(O)}{c(O)}$ in the set of cycles in $\overline{G}$
that correspond to $O(v)$. Then for any cycle $O'$ in $\overline{G}$
with $c(O')\leq C$, $\frac{d(O)}{c(O)}\leq\frac{d(O')}{c(O')}$ holds. \end{lem}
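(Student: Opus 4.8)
The plan is to prove the inequality through a chain that routes every candidate cycle of $\overline{G}$ into some auxiliary graph $H(v_i)$ and then routes $O(v)$ back out, exploiting two facts: the correspondence between cycles of $\overline{G}$ and cycles of the $H(v_i)$ preserves both cost and delay edge-by-edge, and the ratio of a sum dominates the minimum of the summand ratios (the mediant inequality).

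First I would fix an arbitrary cycle $O'$ in $\overline{G}$ with $c(O')\le C$ and choose any vertex $v'$ on it. Invoking the converse correspondence observed just before the statement---every cycle through $v'$ in $\overline{G}$ lifts to a cycle of $H(v')$---I would realize $O'$ as a cycle $\widehat{O'}$ in $H(v')$ by starting at the copy $v'^{1}$, advancing the level index by $c(e)$ along each traversed edge $e$, and closing the loop with the unique backward edge to $v'^{1}$; the hypothesis $c(O')\le C$ is exactly what keeps the level index within range (cf.\ Lemma \ref{lem:cyclecost}). Since Algorithm \ref{alg:Construction-of-auxiliary} gives each forward edge the cost $c(e)$ and delay $d(e)$ of its original and gives the backward edge cost $0$ and delay $0$, we have $c(\widehat{O'})=c(O')$ and $d(\widehat{O'})=d(O')$, whence $\frac{d(\widehat{O'})}{c(\widehat{O'})}=\frac{d(O')}{c(O')}$. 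The minimality defining $O(v')$ in $H(v')$ and the minimality defining $O(v)$ over all $n$ vertices then give $\frac{d(O(v))}{c(O(v))}\le\frac{d(O(v'))}{c(O(v'))}\le\frac{d(\widehat{O'})}{c(\widehat{O'})}=\frac{d(O')}{c(O')}$.

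It remains to compare $O(v)$ with $O$. The cycle $O(v)$ is simple in $H(v)$ but projects to a closed walk in $\overline{G}$, because distinct level-copies of a vertex collapse to the same vertex; this closed walk decomposes into edge-disjoint simple cycles $C_{1},\dots,C_{m}$, which are exactly the cycles of $\overline{G}$ corresponding to $O(v)$. As the projection is cost- and delay-preserving, $c(O(v))=\sum_{j}c(C_{j})$ and $d(O(v))=\sum_{j}d(C_{j})$. Writing $\frac{\sum_{j}d(C_{j})}{\sum_{j}c(C_{j})}$ as the convex combination $\sum_{j}\frac{c(C_{j})}{\sum_{l}c(C_{l})}\cdot\frac{d(C_{j})}{c(C_{j})}$ exhibits the ratio of $O(v)$ as lying above the smallest summand ratio; since $O$ is selected to minimize $\frac{d(\cdot)}{c(\cdot)}$ over $C_{1},\dots,C_{m}$, this yields $\frac{d(O)}{c(O)}\le\frac{d(O(v))}{c(O(v))}$, and chaining with the previous display proves the claim.

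The step demanding the most care is the well-definedness of these ratios, i.e.\ guaranteeing $c(\cdot)>0$ on every cycle in sight; the convex-combination argument collapses if any $C_{j}$ has zero cost. Here I would lean on the fact that in the improving phase the reversed edges of $P_{1},\dots,P_{k}$ are assigned the small positive cost $\epsilon$ rather than $0$ (Step 2 of Algorithm \ref{alg:improved-algo}), so every edge---and therefore every cycle---has strictly positive cost, while the bound $\epsilon<\frac{1}{mnD}$ keeps the total artificial cost below a single unit and thus harmless against the integral budget $C$. The convexity argument is insensitive to the sign of the delays, so the negated delays on reversed edges pose no problem. The only residual bookkeeping is an off-by-one verification in the level count of $H(v')$---a cycle of cost $c$ terminates at level $1+c$---which I would absorb into the construction so that cost exactly $C$ remains representable.
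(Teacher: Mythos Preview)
Your proof is correct and follows the same strategy as the paper's: lift the competitor cycle $O'$ into an auxiliary graph, invoke the minimality of $O(v')$ there and then the minimality of $O(v)$ among all vertices, and compare back to $O$. You are in fact more careful than the paper on two points it glosses over. First, the paper lifts $O'$ directly into $H(v)$ and derives a contradiction to the minimality of $O(v)$, which tacitly assumes $v$ lies on $O'$; you correctly lift into $H(v')$ for some $v'\in O'$ and then pass to $O(v)$ via the outer minimization. Second, the paper simply asserts $\frac{d(O)}{c(O)}\le\frac{d(O(v))}{c(O(v))}$ inside its contradiction chain; your mediant (convex-combination) argument on the decomposition of $O(v)$ into simple cycles of $\overline{G}$ is precisely what justifies that step. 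Your remark on strict positivity of costs via the $\epsilon$ of Algorithm~\ref{alg:improved-algo} is also well placed, since the opening sentence of the section sets reversed-edge costs to $0$, which would leave the ratios undefined.
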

\begin{proof}
Suppose this lemma is not true, then there must exist in $\overline{G}$
a cycle, say $O'$, such that $\frac{d(O)}{c(O)}>\frac{d(O')}{c(O')}$
and $c(O')\leq C$ hold. Then the cycle $O'(v)$ in $H(v)$ that corresponds
to $O'$ is also with $\frac{d(O'(v))}{c(O'(v))}=\frac{d(O')}{c(O')}<\frac{d(O)}{c(O)}\leq\frac{d(O(v))}{c(O(v))}$,
contradicting with the minimality of $O(v)$ in $H(v)$. This completes
the proof.
\end{proof}

\subsection{Computing the cycle $O$ with minimum $\frac{d(O)}{c(O)}$}

The main idea of the algorithm to compute a cycle $O$ with $\frac{d(O)}{c(O)}$
minimized in $\overline{G}$ is to compute the cycle $O'$ with minimum
$\frac{d(O')}{c(O')}$ among all cycles in all $H(v)$s for each $v\in\overline{G}$.
Following Lemma \ref{lem:cycleinH}, the cycle $O$ in $\overline{G}$
is the cycle with minimum $\frac{d(O)}{c(O)}$ among the cycles in
$\overline{G}$ corresponding to all the computed $O'$s. The detailed
steps are as below:
\begin{enumerate}
\item For $i=1\, to\, n$

\begin{enumerate}
\item Construct $H(v_{i})$ for $v_{i}\in\overline{G}$ by Algorithm \ref{alg:Construction-of-auxiliary};
\item Compute cycle $O(v_{i})$ with minimum $\frac{d(O(v_{i}))}{c(O(v_{i}))}$
in $H(v_{i})$ by employing the minimum cost-to-time ratio cycle algorithm
in \cite{ahuja1993network};
\item Select $O(v)$ with minimum $\frac{d(O(v))}{c(O(v))}$ from the $n$
computed cycles $O(v_{1}),\dots,O(v_{n})$;
\end{enumerate}
\item Select the cycle $O$ with minimum $\frac{d(O)}{c(O)}$ among the
cycles in $\overline{G}$ that correspond to $O(v)$.
\end{enumerate}
Clearly, the cycle $O$ attains minimum $\frac{d(O)}{c(O)}$ in $\overline{G}$.
Besides, following Lemma \ref{lem:cyclecost} we have $c(O)\leq C$.
 Therefore the cycle $O$ is correctly the promised cycle. This completes
the proof of the approximation ratio.

\section{Conclusion}

This paper gave a novel approximation algorithm with ratio $(1+\beta,\,\max\{2,\,1+\ln\frac{1}{\beta}\})$
for the $k$BCP problem based on improving a simple $(\alpha,\,2-\alpha)$-approximation
algorithm by constructing interesting auxiliary graphs and employing
the cycle cancellation method. By setting $\beta=0$, an approximation
algorithm with bifactor ratio $(1,\,O(\ln n))$,
i.e. an $O(\ln n)$-approximation algorithm can be obtained immediately.
To the best of our knowledge, it is the first non-trivial approximation
algorithm for this problem that obeys the delay constraint strictly.
We are now investigating whether any constant factor approximation
algorithm exists for computing a solution that strictly obey the delay
constraint.

\bibliographystyle{plain}
\bibliography{disjointQoS}

\begin{thebibliography}{10}

\bibitem{ahuja1993network}
R.K. Ahuja, T.L. Magnanti, and J.B. Orlin.
\newblock Network flows: theory, algorithms, and applications.
\newblock 1993.

\bibitem{bhatia2006finding}
R.~Bhatia, M.~Kodialam, and TV~Lakshman.
\newblock {Finding disjoint paths with related path costs}.
\newblock {\em Journal of Combinatorial Optimization}, 12(1):83--96, 2006.

\bibitem{chao2007new}
P.~Chao and S.~Hong.
\newblock A new approximation algorithm for computing 2-restricted disjoint
  paths.
\newblock {\em IEICE transactions on information and systems}, 90(2):465--472,
  2007.

\bibitem{garey1979computers}
M.R. Garey and D.S. Johnson.
\newblock {\em {Computers and intractability}}.
\newblock Freeman San Francisco, 1979.

\bibitem{polyvertexMinMin}
L.~Guo and H.~Shen.
\newblock {On Finding Min-Min disjoint paths}.
\newblock {\em accepted by Algorithmica}.

\bibitem{guopdcat}
L.~Guo and H.~Shen.
\newblock {Efficient approximation algorithms for computing k disjoint minimum
  cost paths with delay constraint}.
\newblock In {\em PDCAT(2012), IEEE}, pages 627--631. IEEE, 2012.

\bibitem{guo2012tcs}
L.~Guo and H.~Shen.
\newblock On the complexity of the edge-disjoint min-min problem in planar
  digraphs.
\newblock {\em Theoretical computer science}, 432:58--63, 2012.

\bibitem{li1989cft}
C.L. Li, T.S. McCormick, and D.~Simich-Levi.
\newblock {The complexity of finding two disjoint paths with min-max objective
  function}.
\newblock {\em Discrete Applied Mathematics}, 26(1):105--115, 1989.

\bibitem{lorenz2001simple}
D.H. Lorenz and D.~Raz.
\newblock {A simple efficient approximation scheme for the restricted shortest
  path problem}.
\newblock {\em Operations Research Letters}, 28(5):213--219, 2001.

\bibitem{orda2004efficient}
A.~Orda and A.~Sprintson.
\newblock {Efficient algorithms for computing disjoint QoS paths}.
\newblock In {\em IEEE INFOCOM}, volume~1, pages 727--738. Citeseer, 2004.

\bibitem{suurballe1974dpn}
JW~Suurballe.
\newblock {Disjoint paths in a network}.
\newblock {\em Networks}, 4(2), 1974.

\bibitem{suurballe1984qmf}
JW~Suurballe and RE~Tarjan.
\newblock {A quick method for finding shortest pairs of disjoint paths}.
\newblock {\em Networks}, 14(2), 1984.

\bibitem{xu2006caa}
D.~Xu, Y.~Chen, Y.~Xiong, C.~Qiao, and X.~He.
\newblock {On the complexity of and algorithms for finding the shortest path
  with a disjoint counterpart}.
\newblock {\em IEEE/ACM Transactions on Networking}, 14(1):147--158, 2006.

\bibitem{xue2008polynomial}
G.~Xue, W.~Zhang, J.~Tang, and K.~Thulasiraman.
\newblock Polynomial time approximation algorithms for multi-constrained qos
  routing.
\newblock {\em IEEE/ACM Transactions on Networking (TON)}, 16(3):656--669,
  2008.

\end{thebibliography}

\end{document}